\documentclass{ifacconf}

\makeatletter

\def\@biblabel#1{[#1]}
\makeatother
\usepackage{algorithm}
\usepackage{algpseudocode}
\usepackage{graphicx}      
\usepackage{array}      
\usepackage{booktabs}   
\usepackage{float, amsfonts,amsmath, amssymb, mathtools, xcolor, multirow, multicol}
\usepackage[numbers]{natbib}        
\AtBeginDocument{}
\makeatletter
\let\NAT@numbersfalse\NAT@numberstrue
\let\ifac@ssect\@ssect
\makeatother
\usepackage{caption}
\usepackage[hyphens]{url}
\usepackage{hyperref}
\usepackage{bookmark}
\makeatletter\let\@ssect\ifac@ssect\makeatother
\usepackage{wrapfig}

\makeatletter
\def\@overcaptionskip{0pt}
\def\section{\@startsection{section}{1}{\z@}%
  {1\p@ \@plus 1\p@ \@minus 1\p@}%
  {1\p@ \@plus 1\p@ \@minus 1\p@}%
  {\centering}}
\makeatother
\begin{document}
\begin{frontmatter}

\title{Agent-Based Evolutionary Dynamics for Mixed Autonomy Weaving Ramps}

\author[USC-CEE]{Sheryl Paul}
\author[USC-CEE]{Kexin Wang}
\author[USC-CEE]{Ruolin Li}
\author[USC-CEE]{Jyotirmoy V. Deshmukh}

\address[USC-CEE]{University of Southern California, Los Angeles, CA, USA.\\
\{sherylpa, kwang255, ruolinl, jdeshmuk\}\texttt{@usc.edu}}

\begin{abstract}

Existing models of mixed-autonomy weaving ramps characterize how altruistic connected and automated vehicles (CAVs) can improve traffic efficiency at the population level, but provide limited insight into how such behavior emerges from decentralized vehicle interactions or how it is affected by finite populations, heterogeneous preferences, and imperfect information. We develop an agent-based model of a macroscopic weaving-ramp framework in which individual vehicles adapt their lane choices using an evolutionary game-theoretic update rule and altruism-based objectives providing a microscopic interpretation of the original Wardrop model. We prove convergence of the decentralized dynamics to the unique equilibrium predicted by the macroscopic theory. Beyond reproducing aggregate equilibrium behavior, the framework enables the study of deployment-level questions that cannot be addressed by static analysis. Simulation results demonstrate close agreement with the macroscopic predictions while revealing how convergence rates, adaptation to changing traffic conditions, heterogeneous altruism levels among CAVs, and imperfect state information influence system performance and the distribution of altruistic burden across vehicles. These results provide a bridge between equilibrium traffic theory and decentralized mixed-autonomy deployment.
\end{abstract}
\begin{keyword}
mixed autonomy, weaving ramps, lane choice, evolutionary game theory,
agent-based model, social value
orientation.
\end{keyword}

\end{frontmatter}

\section{Introduction}
\label{sec:intro}
Freeway weaving sections, where entering and exiting flows must
cross within a short roadway segment, remain one of the most persistent
bottlenecks on highways
\cite{cassidy2005merge}.
Recent advances in connected and automated vehicles (CAVs) create an
opportunity to influence traffic conditions through lane-choice
behavior~\cite{stern2018dissipation}. Unlike human-driven vehicles (HDVs), whose decisions emerge from
individual preferences, CAVs can be programmed to account for broader
traffic objectives while remaining decentralized.

Vehicle-level traffic models examine how individual interactions and control
decisions affect lane changing, stability, and throughput, while Social Value Orientation (SVO)-based
models describe how CAVs balance private and collective costs
\cite{talebpour2016influence,schwarting2019socialbehavior}. A complementary
evolutionary-game-theoretic literature studies how strategy shares change over
time, from evolutionarily stable strategies and replicator dynamics to
traffic-assignment processes that converge to Wardrop equilibria
\cite{maynardsmith1973,taylor1978,smith1984}.

At the macroscopic scale, recent work has developed a game-theoretic model of
lane choice at weaving ramps
\cite{he2025staybypass,wang2025altruism}. The model represents HDVs through a
Wardrop equilibrium and incorporates SVO into CAV objectives, allowing CAVs to
trade off private delay against aggregate congestion. It characterizes when
altruistic CAVs improve traffic efficiency, derives closed-form penetration
thresholds, and quantifies the achievable reduction in social delay. Thus, the
mixed-autonomy weaving problem is analytically characterized at equilibrium.

The two modeling perspectives nevertheless leave a complementary gap. The macroscopic theory characterizes equilibrium lane-choice fractions but does
not provide a decentralized vehicle-level mechanism for reaching them.
Equilibrium existence also does not guarantee convergence under local learning;
adjustment may be slow, oscillatory, or unsuccessful
\cite{hart2003uncoupled,sandholm2010}. Moreover, the static formulation cannot
capture adaptation to changing demand, determine whether finite heterogeneous
fleets reproduce mean-field predictions, or explain how the altruistic burden
is distributed across vehicles. It also assumes exact traffic information,
whereas deployed CAVs rely on noisy local observations and imperfect state
estimates. On the other hand, the aforementioned vehicle-level
models were not designed to reproduce the calibrated cost
structure and penetration thresholds of such a weaving-ramp model that incorporates CAVs as well as HDVs.
They therefore do not establish whether finite-agent adaptation recovers its
aggregate predictions under realistic deployment constraints.

These limitations motivate the central question:
\emph{Can the equilibrium prescriptions of the macroscopic theory be realized
by a finite population of decentralized vehicles, and if so, what behavior
emerges under adaptation, heterogeneous altruism, and imperfect information?}
This paper answers the question by treating the calibrated macroscopic model of \cite{he2025staybypass,wang2025altruism} as a specification and developing a new agent-based model (ABM) of it. The equilibrium cost structure and penetration benchmarks are inherited from those papers; the vehicle-level dynamics, convergence analysis, and deployment studies below are novel. 

\textbf{Contributions:} \textit{(i)}  We develop a decentralized realization of the macroscopic weaving-ramp model using replicator-based adaptation from evolutionary game theory \cite{weibull1995,sandholm2010}. We show that the resulting lane-choice game is an exact potential game and that the induced dynamics converge globally to the unique Wardrop equilibrium.
\textit{(ii)} Using the resulting agent-based model, we characterize deployment phenomena that are absent from the equilibrium theory, including convergence regimes, demand-tracking limits, and the distribution of costs and benefits across altruistic and selfish vehicles.
\textit{(iii)} We extend the framework to heterogeneous altruism and partial observability and identify when aggregate behavior is summarized by an effective altruism angle and analyze the impact of noise. 
\textit{(iv)} We show that belief sharing improves performance under stationary demand, but can be conservative under abrupt changes. To address this, we introduce a forgetting-factor update that balances accuracy and adaptability.

\section{Preliminaries}
\label{sec:prelims}

\subsection{Population-game Model}
We briefly introduce the population-game concepts used in the paper. Consider a population of agents choosing between two pure strategies,
$S$ (stay) and $B$ (bypass), and let $x\in[0,1]$ denote the fraction of agents selecting~$S$, so that $1-x$ select~$B$.

A two-strategy \emph{population game} is defined by a pair of cost functions
$J^S, J^B : [0,1]\to\mathbb{R}$, where $J^S(x)$ and $J^B(x)$ denote
the costs incurred by agents choosing strategies $S$ and $B$,
respectively, at population state $x$.
The mean population cost is
    $\bar{J}(x) = x J^S(x) + (1-x) J^B(x)$.

A state $x^\star\in[0,1]$ is a \emph{Wardrop equilibrium} if no agent
can reduce its cost by unilaterally switching strategy:
$x^\star \cdot \bigl(J^S(x^\star) - J^B(x^\star)\bigr) \leq 0; \ \  
    (1-x^\star) \cdot \bigl(J^B(x^\star) - J^S(x^\star)\bigr) \leq 0.$
For an interior equilibrium $x^\star\in(0,1)$, this reduces to $J^S(x^\star) = J^B(x^\star)$.
Wardrop equilibrium is the user-equilibrium concept from
transportation theory and coincides with the symmetric Nash equilibrium
of the corresponding population game \cite{wardrop1952,sandholm2010}.

The continuous-time replicator dynamics associated with a population
game $(J^S, J^B)$ are
\begin{equation}
    \dot{x} = x(1-x)\bigl(J^B(x) - J^S(x)\bigr).
    \label{eq:replicator}
\end{equation}
Replicator dynamics model decentralized adaptation: strategies with
lower cost increase their population share over time.
Any interior stationary point satisfies $J^S(x^\star)=J^B(x^\star)$
and therefore coincides with an interior Wardrop equilibrium.
In the agent-based model, vehicles update their lane-choice
probabilities using a discrete-time replicator rule; as population
size grows and the step size shrinks, the stochastic process converges
to Eq.~\eqref{eq:replicator} \cite{benaim2003,sandholm2010},
so the replicator ODE serves as its analytical approximation.

A population game is an \emph{exact potential game} if there exists a
continuously differentiable function $\Phi:[0,1]\to\mathbb{R}$
satisfying
$\tfrac{d\Phi}{dx} = J^B(x) - J^S(x)
    , \ \  \forall\, x\in[0,1].$

In an exact potential game the stationary points of the replicator
dynamics coincide with those of the potential, which can be used
directly as a Lyapunov function to establish stability
\cite{monderer1996,rosenthal1973,sandholm2010}.

Let $x^S$ and $x^B$ denote the fractions of vehicles choosing
the stay and bypass strategies, ($x^S+x^B=1$). The agent's private costs are
\begin{equation}
    J^{\mathrm{self},S}=K_1^Sx^S+B_1^S, \ \ 
J^{\mathrm{self},B}=K_1^Bx^B+B_1^B
 \label{eq:lane_costs}
\end{equation}
The $K_1$ and $B_1$ coefficients capture the calibrated
traversing and merging costs; their detailed expressions are provided in Section~\ref{sec:macro}. The aggregate social cost is the sum of the costs incurred
by all traffic classes, weighted by their respective flow fractions and is denoted by $J^{\mathrm{soc}}$. 

Social Value Orientation (SVO) represents an agent’s relative concern for private and collective outcomes by an altruism angle $\theta\in[0,\pi/2]$. For a lane choice $a \in \{S, B\}$, the agent minimizes the
effective cost
\begin{equation}
    \tilde{J}^a(\theta)
    = \cos(\theta)\,J^{\mathrm{self},a}
    + \sin(\theta)\,({\partial J^{\mathrm{soc}}}/{\partial x^a}),
    \label{eq:svo}
\end{equation}
where $J^{\mathrm{self},a}$ is the agent's private travel cost and $\frac{\partial J^{\mathrm{soc}}}{\partial x^a}$ is the marginal social cost of its lane choice, and the sine–cosine weights are the standard unit-circle parameterization of SVO~\cite{vanlange1999,murphy2011,liebrand1986}.
The boundary cases are $\theta=0$ (fully selfish) and $\theta=\pi/2$
(fully altruistic).

\subsection{Weaving Ramp Model}
\label{sec:macro}

We now present the macroscopic weaving-ramp model
\cite{he2025staybypass,wang2025altruism} and state
 equilibrium predictions that the agent-based model of
Sec.~\ref{sec:micro} will be shown to match.

\textit{Setup and Parameters.}
We consider a single highway weaving section with two lanes.
Vehicles are partitioned into three movement classes by their origin
and destination: \emph{entering} vehicles, which join the highway at
the on-ramp; \emph{exiting} vehicles, which leave at the off-ramp; and
\emph{through} vehicles, which traverse the section without changing
lanes.
Each \emph{through} vehicle in the mainline lane adjacent to the ramps additionally selects one of two lane-choice strategies: \emph{stay} ($S$), remaining in lane~1, or
\emph{bypass} ($B$), shifting to lane~2 to avoid the merge conflict.
The model is one-shot at the equilibrium-flow scale: lane choices
within the weaving region are resolved into stationary flow ratios,
and intra-section spatial dynamics are abstracted into the lane-cost
functions below.

Let $n=(n_{\mathrm{enter}},n_{\mathrm{exit}},n_2)$ denote the
normalized demand composition, where they denote the fractions of entering, exiting, and
through vehicles, respectively.
Let $x\in[0,1]$ denote the fraction of lane 1 through vehicles choosing $S$, so that $1-x$ choose $B$.
Let $p\in[0,1]$ denote the CAV penetration rate; the remaining $1-p$
fraction are HDVs.
The cost functions depend on four coefficients $(C_1^\text{t},C_1^\text{m},C_2^\text{t},C_2^\text{m})$ from the optimization method in~\cite{li2019extended} and six coefficients
$(\alpha,\beta,\gamma,\delta,\omega,\rho)$, calibrated against
microsimulation in~\cite{he2025staybypass}.\footnote{
$(C_1^\text{t},C_1^\text{m},C_2^\text{t},C_2^\text{m})=(1.000,\,1.000,1.000,\,1.000)$} 
\footnote{$(\alpha,\beta,\gamma,\delta,\omega,\rho)
=(1.255,\,1.138,\,2.384,\,3.094,\,1.000,\,1.000).$}

\textit{Lane Costs.}
The travel delay costs of $S$ and $B$ are
affine in the lane-choice state $x$, as seen in ~\eqref{eq:lane_costs}:
with
$K_1^{S} = C_1^\text{t} \alpha + C_1^\text{m} (\omega\, n_{\mathrm{exit}} + n_{\mathrm{enter}}),
B_1^{S} = C_1^\text{t} (\beta\, n_{\mathrm{exit}} + n_{\mathrm{enter}}),
K_1^{B} = C_2^\text{t} \gamma + C_2^\text{m} (\rho\, n_2 + \delta\, n_{\mathrm{exit}}),
B_1^{B} = C_2^\text{t} n_2.$

The aggregate social cost $J^{\mathrm{soc}}(x,n)$ is the total social cost experienced by all vehicles in the weaving section, obtained by summing each class's individual cost weighted by its flow: (i) steadfast and bypassing vehicles in lane~1, (ii) through vehicles in lane~2, and (iii) the entering and exiting vehicles and can be given as: $J^{\mathrm{soc}}
={} x^S(J^{\mathrm{self,s}})
   + x^B(J^{\mathrm{self,b}}) + n_2(K_2^Sx^B+B_2^S) 
 + n_{\mathrm{exit}}
    (K_2^{\mathrm{exit}}x^S+B_2^{\mathrm{exit}}) + n_{\mathrm{enter}}
    (K_0^{\mathrm{enter}}x^S+B_0^{\mathrm{enter}}).$
The strategy-specific marginal
social costs are
\begin{equation}
M^a(x,n) := \bigl.\partial J^{\mathrm{soc}}/\partial x^a\bigr|_{(x^S,x^B)=(x,1-x)},
\  a\in\{S,B\}.
\label{eq:msc}
\end{equation}
For the calibrated affine lane-cost model, $J^{\mathrm{soc}}$ is quadratic in $x$, and $M^s$ and $M^b$ are affine.

\textit{Equilibrium Benchmarks.}
Two benchmark lane-choice states organize the analysis (i) \emph{HDV Wardrop point} $\xi(n)$, the equilibrium
of the pure-HDV population ($p=0$), obtained by setting
$J_1^{S}=J_1^{B}$ in~\eqref{eq:lane_costs}, and (ii) the \emph{social optimum} $\Gamma(n)$, the steadfast
fraction that minimizes the aggregate cost:
\begin{equation}
\label{eq:phi}
{\xi(n)} \;=\; \tfrac{K_1^{B}+B_1^{B}-B_1^{S}}{K_1^{S}+K_1^{B}} ;\  \ \ \Gamma(n) \;=\; \arg\min_{x\in[0,1]} J^{\mathrm{soc}}(x,n).
\end{equation}
The admissible configurations of interest are those in which the
selfish equilibrium and the social optimum are distinct, i.e.\
$0<\xi(n)<\Gamma(n)<1$; under this
condition altruistic CAVs can steer the population from
$\xi$ toward $\Gamma$.\footnote{This ordering applies to operational regimes in which selfish traffic supplies fewer steadfast vehicles and increasing the weight on marginal social cost moves the equilibrium from $\xi$ toward $\Gamma$. }

\textit{Three-Phase Penetration Structure.}
The aggregate impact of altruistic CAVs as the CAV penetration $p$
varies is established in~\cite{wang2025altruism}: for any
admissible configuration, $J^\text{soc}$ at the population
equilibrium follows a three-phase structure in $p$.
(i) At low penetration $p\in[0,p_1]$, $J^\text{soc}$ remains at the
HDV-only reference level; (ii) over an
intermediate band $p\in(p_1,p_2]$, it decreases strictly in $p$; and (iii)
for $p\in(p_2,1]$, it saturates at the social-optimum level.
\footnote{The two transition points are:
$p_1 \;=\; \tfrac{{K_1^{B}+B_1^{B}-B_1^{S}}}{K_1^{S}+K_1^{B}}, \text{ and }
p_2 \;=\;\tfrac{(2K_1^{B}+B_1^{B}-B_1^{S}-n_\text{exit}K_2^\text{exit}-n_\text{enter}K_0^\text{enter}+n_2K_2^\text{s})}{(2\,(K_1^{S}+K_1^{B}))}$
where $K_2^{\mathrm{exit}}$ and $K_0^{\mathrm{enter}}$
are the $x^S$-coefficients in exiting and entering costs, while $K_2^s$ is the $x^B$-coefficient in Lane-2 through costs.}
This three-phase structure is the central macroscopic prediction that
the ABM of Sec.~\ref{sec:micro} will be shown to
reproduce.

\textit{SVO Effective Cost.}
Specializing the definition of SVO~\eqref{eq:svo} to the present setting, a CAV
with altruism angle $\theta\in[0,\pi/2]$ minimizes the
effective cost:
\begin{equation}
\label{eq:svo_weaving}
\tilde{J}^{(\cdot)}(x,n;\theta)
\;=\;
\cos(\theta)\,J_1^{(\cdot)}(x)
\;+\;
\sin(\theta)\,{M^{(\cdot)}(x,n)},
\end{equation}
for $(\cdot)\in\{S,B\}$, with $J_1^{(\cdot)}$ given
by~\eqref{eq:lane_costs} and $M$ by~\eqref{eq:msc}.
The angle $\theta$ is a single broadcastable parameter: HDVs use
$\theta=0$, while the CAV fleet adopts a value assigned by the
operator.
The induced mixed-population equilibrium interpolates between $\xi$
(at $\theta=0$) and $\Gamma$ (at $\theta=\pi/2$), and is the object
the ABM of the next section converges to.

\label{sec:abm}
\section{Agent-Based Model and Theory}
\label{sec:micro}
This section constructs the agent-based model (ABM) that realizes the aggregate
statements over population fractions of the macroscopic model
at the level of individual vehicles.
The ABM uses the lane costs~\eqref{eq:lane_costs}, the social-cost
machinery, and the SVO effective cost~\eqref{eq:svo_weaving} 
as specified in Sec.~\ref{sec:macro}.

\textit{Agents and Lane-Choice Dynamics.}
\label{subsec:abm}
We consider a population of $N$ vehicles indexed by
$i\in\{1,\dots,N\}$.
Each vehicle is characterized by its (fixed) movement class
$c_i\in\{\mathrm{enter},\mathrm{exit},\mathrm{through}\}$, its (fixed)
altruism angle $\theta_i\in[0,\pi/2]$, and its (time-varying)
lane-choice probability $\pi_i(t)\in[0,1]$ for selecting the steadfast
strategy $S$.
\begin{equation}
    x(t) \;=\; \tfrac{1}{N}\sum_{i=1}^{N}\pi_i(t)
    \label{eq:abm-aggregate}
\end{equation}
HDVs have $\theta_i=0$; CAVs have $\theta_i>0$, with the value broadcast  by the fleet operator. 

The aggregate steadfast
  fraction is given by \eqref{eq:abm-aggregate}.
Each vehicle updates its lane-choice probability by a discrete-time
replicator rule applied to its own SVO cost:
\begin{flalign}
\label{eq:abm_update}
&\scalebox{0.87}{$\displaystyle
\pi_i(t+1) = \pi_i(t) + \eta\,\pi_i(t)\bigl(1-\pi_i(t)\bigr)
\bigl[\tilde{J}_i^{B}\bigl(x(t)\bigr)-\tilde{J}_i^{S}\bigl(x(t)\bigr)\bigr]
$}
\end{flalign}
where $\tilde{J}_i^{(\cdot)}$ is the effective
cost~\eqref{eq:svo_weaving} evaluated at the agent's own angle
$\theta_i$, and $\eta>0$ is a step size.\footnote{We assume that $\eta$ is small enough that
$\eta\lvert\tilde J_i^B-\tilde J_i^S\rvert\leq 1$, ensuring that
$\pi_i(t)\in[0,1]$ is preserved by the update.}
The rule is decentralized: each vehicle uses its own altruism
angle, the realized lane costs, and the demand composition needed to
form the marginal social cost $M$.
The operator's only intervention is the choice of $\theta$ broadcast
to the CAV fleet.

\textit{Mean-Field Approximation and Convergence.}
\label{subsec:meanfield}

We derive a scalar continuous-time approximation using the rescaled
time $\tau=\eta t$. This follows standard stochastic-approximation
arguments~\cite{benaim2003}, and its agreement with the aggregate
agent-based dynamics is assessed numerically.

Let
\(
\Delta_0(x):=J_1^B(x)-J_1^S(x)
\)
denote the cost advantage of choosing the steadfast strategy for an
HDV. For a CAV with altruism angle $\theta$, the corresponding
effective-cost difference is
\(
\Delta_\theta(x)
=
\cos\theta\,\Delta_0(x)
+
\sin\theta\,[M^B(x)-M^S(x)].
\)
Suppose that a fraction $1-p$ of the strategic vehicles are HDVs and
a fraction $p$ are CAVs with SVO angle $\theta$. Their
population-average incentive is then:
\begin{equation}
    \Delta^{\mathrm{pop}}(x)
=
(1-p)\Delta_0(x)+p\Delta_\theta(x)
\label{eq:delta-pop}
\end{equation}
From ~\eqref{eq:abm-aggregate} and ~\eqref{eq:abm_update}, the exact aggregate change generated by the individual probability
updates is
\(
\frac{x(t+1)-x(t)}{\eta}
=
\frac{1}{N}\sum_{i=1}^{N}
\pi_i(t)\bigl(1-\pi_i(t)\bigr)
\Delta_{\theta_i}(x(t)).
\)
To obtain a scalar approximation, we assume that the individual
probabilities remain concentrated around their population average,
giving:
\(
({1}/{N})\sum_{i=1}^{N}
\pi_i(1-\pi_i)\Delta_{\theta_i}(x)
\approx
x(1-x)\Delta^{\mathrm{pop}}(x).
\) 
\begin{equation}\label{eq:meanfield}
  \dot{x} = x(1-x)\,\Delta^{\mathrm{pop}}(x),
\end{equation}
After introducing the rescaled time $\tau=\eta t$, the resulting
continuous-time approximation is \eqref{eq:meanfield}.
Intuitively, $x(1-x)$ measures the population's responsiveness,
whereas $\Delta^{\mathrm{pop}}(x)$ measures its average incentive to
choose the steadfast strategy.

\begin{theorem}[\textbf{Aggregate potential and convergence}]
\label{thm:convergence}
Under the homogenization closure, suppose $K^S+K^B>0$ and $a>0$.\footnote{With positive coefficients $(\alpha,\beta,\gamma,\delta,\omega,\rho)$
and nonnegative demand fractions, $J_{\mathrm{soc}}$ is strictly convex in $x$
with $a
= 4(\alpha+\gamma+n_{\mathrm{enter}}+(\omega+\delta)\,n_{\mathrm{exit}}
+\rho\,n_2) > 0$, identical for the lane-1 and full social-cost forms
since the additional movement terms are linear in $x$. Likewise
$K^S+K^B = \alpha+\gamma+n_{\mathrm{enter}}+(\omega+\delta)n_{\mathrm{exit}}
+\rho n_2 > 0$.}
Then, for every $p\in[0,1]$ and $\theta\in[0,\pi/2]$:\\
\emph{(i)} $\Phi^{\mathrm{pop}}$ is a strictly concave exact potential
for the aggregate mean-field game and therefore has a unique maximizer
$x^\star\in[0,1]$.\\
\emph{(ii)} If $x^\star\in(0,1)$, it is the unique interior aggregate
Wardrop equilibrium and the unique interior stationary point
of~\eqref{eq:meanfield}. It is globally asymptotically stable on
$(0,1)$ with Lyapunov function
$V(x)=\Phi^{\mathrm{pop}}(x^\star)-\Phi^{\mathrm{pop}}(x)$.\\
\emph{(iii)} Under $u=-J$, the interior maximizer is the unique
symmetric Nash equilibrium of the representative aggregate game and
is evolutionarily stable.\footnote{The model is written in costs (which agents minimize), whereas potential-game and evolutionary-stability statements are conventionally written in payoffs (which agents maximize). We use the payoff convention $u = -J$. Under this, lower-cost strategy corresponds to higher potential, and  convexity of social cost to  concavity of the potential.}
\end{theorem}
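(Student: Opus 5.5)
Since $J_1^S,J_1^B$ and the marginal social costs $M^S,M^B$ are all affine in $x$, the population incentive $\Delta^{\mathrm{pop}}$ in \eqref{eq:delta-pop} is affine as well. I will write $\Delta^{\mathrm{pop}}(x)=c-\kappa x$ and define the potential
\[
\Phi^{\mathrm{pop}}(x)=\int_0^x \Delta^{\mathrm{pop}}(s)\,ds,
\]
so that $d\Phi^{\mathrm{pop}}/dx=\Delta^{\mathrm{pop}}=J^B_{\mathrm{pop}}-J^S_{\mathrm{pop}}$. This is exactly the exact-potential condition of the Preliminaries, applied to the representative aggregate game whose costs are the population-weighted effective costs $(1-p)J_1^{(\cdot)}+p\tilde J^{(\cdot)}(\cdot;\theta)$.

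\textbf{Step 1: strict concavity and part (i).} I will compute the slope. Since $x^S=x$ and $x^B=1-x$, the HDV part gives $\Delta_0'(x)=-(K^S+K^B)$. For the social part, $M^B-M^S=-\tfrac{d}{dx}J^{\mathrm{soc}}(x,1-x)$, and its derivative is $-\tfrac{d^2}{dx^2}J^{\mathrm{soc}}$, a negative constant proportional to $a$. One point to check here is that the convention for the partial derivatives in \eqref{eq:msc} is consistent with the footnote's value of $a$. Then
\[
\kappa=(1-p+p\cos\theta)(K^S+K^B)+p\sin\theta\, a' ,
\]
where $a'>0$ is the relevant constant. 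Because $\cos\theta,\sin\theta\ge0$ are never both zero, $(1-p+p\cos\theta)$ and $p\sin\theta$ are nonnegative and not both zero. Hence $\kappa>0$, so $\Phi^{\mathrm{pop}}$ is a strictly concave quadratic and has a unique maximizer on the compact interval $[0,1]$.

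\textbf{Step 2: part (ii).} If $x^\star\in(0,1)$, the first-order condition gives $\Delta^{\mathrm{pop}}(x^\star)=0$. This is the interior Wardrop condition $J^S=J^B$ for the aggregate costs. Strict monotonicity of $\Delta^{\mathrm{pop}}$ makes this the only zero, so it is the unique interior stationary point of \eqref{eq:meanfield}, because $x(1-x)>0$ on $(0,1)$. For stability, take $V=\Phi^{\mathrm{pop}}(x^\star)-\Phi^{\mathrm{pop}}$. It satisfies $V\ge0$ with equality only at $x^\star$, and
\[
\dot V=-\Delta^{\mathrm{pop}}(x)\,\dot x=-x(1-x)\,\Delta^{\mathrm{pop}}(x)^2<0
\]
for $x\in(0,1)\setminus\{x^\star\}$. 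The interval $(0,1)$ is forward invariant, because the endpoints are equilibria of a scalar ODE. The trajectory stays in a compact sublevel set of $V$ inside $(0,1)$, since $V$ is finite at $0$ and $1$. The key argument is that $x(t)$ is monotone: it cannot cross $x^\star$, and the sign of $\dot x$ is fixed on each side. So $x(t)$ converges, and the limit must be a stationary point in $(0,1)$, namely $x^\star$, since it cannot be $0$ or $1$ given the direction of motion.

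\textbf{Step 3: part (iii).} With payoffs $u=-J$, a symmetric Nash equilibrium of the representative aggregate game is a state at which both strategies earn equal payoff when it is interior. This gives uniqueness by Step 2. For evolutionary stability in the two-strategy population game, I will use the standard characterization: an interior $x^\star$ is an ESS if $(y-x^\star)\bigl(u^S(y)-u^B(y)\bigr)<0$ for all $y\ne x^\star$ near $x^\star$. Here $u^S-u^B=\Delta^{\mathrm{pop}}(y)=-\kappa(y-x^\star)$, so the expression equals $-\kappa(y-x^\star)^2<0$, and in fact the inequality holds globally. Equivalently, the game is strictly stable because of the strictly concave potential.

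\textbf{Main obstacle.} The hard part is Step 1's bookkeeping: making sure that $M^B-M^S$ really has negative slope with the constant $a$ from the footnote. This depends on the sign convention for $\partial J^{\mathrm{soc}}/\partial x^a$ when the constraint $x^S+x^B=1$ is imposed. My first approach will be to express $J^{\mathrm{soc}}$ as a function of $x$ alone and verify that $M^B-M^S=-\tfrac{d}{dx}J^{\mathrm{soc}}$. If the two-variable partials differ from this by a constant, that constant only shifts $c$ and does not affect $\kappa$, so the argument goes through unchanged.
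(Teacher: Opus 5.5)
Your proposal is correct and is essentially the paper's proof. It uses the same potential with $(\Phi^{\mathrm{pop}})'=\Delta^{\mathrm{pop}}$, the same negative slope $-(1-p+p\cos\theta)(K^S+K^B)-p\sin\theta\,a$ (only positivity of the curvature constant matters, so your hedge on the exact value of $a$ is harmless), the same identification of the first-order condition with the Wardrop and stationarity conditions, the same Lyapunov function $V$ with $\dot V=-x(1-x)(\Delta^{\mathrm{pop}})^2$, and strict concavity for Nash/ESS, which you verify directly via $(y-x^\star)\Delta^{\mathrm{pop}}(y)=-\kappa(y-x^\star)^2$ rather than by citation.

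Your monotonicity argument for the scalar ODE supplies a step the sketch leaves implicit. Because $V$ is finite at $0$ and $1$, its sublevel sets need not lie inside $(0,1)$, so drop the sentence claiming they do. For the uniqueness claim in (iii) you should also rule out the boundary states, which follows at once from $\Delta^{\mathrm{pop}}(0)>0>\Delta^{\mathrm{pop}}(1)$.
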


\begin{proof}[Sketch]
By definition, $\Phi_0'=\Delta_0$ and $\Phi_\theta'=\Delta_\theta$, we have
$(\Phi^{\mathrm{pop}})'=\Delta^{\mathrm{pop}}$, proving the exact
potential property. Moreover, from ~\eqref{eq:delta-pop}
\(
(\Phi^{\mathrm{pop}})''
= (\Delta^{\mathrm{pop}})' =
-(1-p)(K_1^S+K_1^B)
-p\bigl[(K_1^S+K_1^B)\cos\theta+a\sin\theta\bigr]<0,
\)
so the potential is strictly concave and has a unique maximizer.
When this maximizer is interior,
$(\Phi^{\mathrm{pop}})'(x^\star)=0$ is both the aggregate Wardrop
condition and the stationary condition for~\eqref{eq:meanfield}.
Furthermore,
$\dot V=-x(1-x)[(\Phi^{\mathrm{pop}})'(x)]^2<0$ for every
$x\in(0,1)\setminus\{x^\star\}$, proving global asymptotic stability.
The Nash and ESS conclusions follow from the strict potential maximum
\cite{monderer1996,sandholm2010,weibull1995}.
\end{proof}
\textit{Heterogeneous Fleets.}
\label{subsec:hetero_theory}
We next allow each CAV's fixed altruism angle to be drawn from a
distribution $F$ on $[0,\pi/2]$. Under the same aggregate
homogenization closure, the expected cost difference becomes
\begin{align}
\Delta_F^{\mathrm{pop}}(x)
={}&(1-p)\Delta_0(x)
+p\,\mathbb E_F[\cos\theta]\,\Delta_0(x) \notag\\
&+p\,\mathbb E_F[\sin\theta]\,[M^B(x)-M^S(x)].
\label{eq:heterogeneous_drift}
\end{align}
Thus, at the aggregate mean-field level, $F$ enters the dynamics only
through $\mathbb E_F[\cos\theta]$ and
$\mathbb E_F[\sin\theta]$.
Define the effective angle and coherence factor by
\(
\bar\theta_{\mathrm{eff}}
=
\operatorname{atan2}
\left(
\mathbb E_F[\sin\theta],
\mathbb E_F[\cos\theta]
\right),
 \)
 \(
\rho_F
=
\sqrt{
\mathbb E_F[\cos\theta]^2+
\mathbb E_F[\sin\theta]^2
}.
\)
Then
\(
\mathbb E_F[\cos\theta]
=
\rho_F\cos\bar\theta_{\mathrm{eff}},
\ \ 
\mathbb E_F[\sin\theta]
=
\rho_F\sin\bar\theta_{\mathrm{eff}},
\)
and~\eqref{eq:heterogeneous_drift} can be written as
\(
\Delta_F^{\mathrm{pop}}
=
(1-p)\Delta_0
+p\rho_F\Delta_{\bar\theta_{\mathrm{eff}}}.
\)
Equivalently, let
\(
\lambda_F:=1-p+p\rho_F,
\ \ 
p_{\mathrm{eff}}
:=
\frac{p\rho_F}{1-p+p\rho_F}.
\)
The heterogeneous drift then factors as
\(
\Delta_F^{\mathrm{pop}}
=
\lambda_F
[
(1-p_{\mathrm{eff}})\Delta_0
+p_{\mathrm{eff}}
\Delta_{\bar\theta_{\mathrm{eff}}}
].
\)
Hence its equilibrium and phase portrait coincide with those of the
homogeneous aggregate model having angle
$\bar\theta_{\mathrm{eff}}$ and penetration
$p_{\mathrm{eff}}$; the positive factor $\lambda_F$ changes only the
time scale.\\
Because $\rho_F\leq1$, with equality if and only if all CAVs share the
same altruism angle, heterogeneity reduces the fleet's effective
aggregate influence. The conclusions of
Thm.~\ref{thm:convergence} therefore continue to hold for the
heterogeneous aggregate mean-field model after replacing
$(p,\theta)$ by
$(p_{\mathrm{eff}},\bar\theta_{\mathrm{eff}})$. \\
This equivalence concerns the expected aggregate dynamics. Finite
fleets with persistent agent-specific altruism may differ because
lane-choice probabilities can become correlated with behavioral type
and because of finite-population sampling variability.

\textit{Partial Observability and Cooperative Belief Sharing.}
\label{subsec:partial_theory}
The SVO effective cost in~\eqref{eq:svo_weaving} depends on the demand
composition
\(n=(n_{\mathrm{enter}},n_{\mathrm{exit}},n_2),
\) which determines both the private lane costs and the marginal social
costs. The macroscopic model assumes that $n$ is known exactly. In
practice, a deployed CAV has access only to noisy local observations of
the traffic state. We model the measurement of vehicle $i$ at
time $t$ as: $y_{i,t}
=
n+\varepsilon_{i,t},
\ 
\varepsilon_{i,t}
\sim
\mathcal N(0,\sigma_{\mathrm{obs}}^2 I)$,
with observations independent across vehicles and time.

Each vehicle replaces $n$ in~\eqref{eq:svo_weaving} by an estimate
$\hat n_{i,t}$ formed from available information. We compare two
estimators: 
\textit{(i) Memoryless estimator.}
The vehicle uses only its most recent observation, $\hat n_{i,t}^{\mathrm{mem}}
=
y_{i,t}$.
This models a one-shot lane decision based solely on the current local
measurement. \textit{(ii) Cooperative belief-sharing estimator.}
Vehicles maintain and communicate beliefs regarding the traffic
composition. Let $\mu_{i,t}$ denote vehicle $i$'s belief and let
$\tau_{i,t}=1/\sigma_{i,t}^{2}$ denote its precision. Whenever a new
observation $y_{j,t+1}$ is received from another vehicle or roadside
infrastructure, the belief is updated according to
$\tau_{i,t+1}=\tau_{i,t} + \tfrac{1}{\sigma_{\mathrm{obs}}^2},
\ \ 
\mu_{i,t+1} =\mu_{i,t}+
\tfrac{1/\sigma_{\mathrm{obs}}^2}
{\tau_{i,t+1}}
(y_{j,t+1}-\mu_{i,t}),$ and the estimate used in the SVO cost is
$\hat n_{i,t}^{\mathrm{share}}
=
\mu_{i,t}$.
This update is the standard Bayesian fusion rule for
independent Gaussian measurements and can be interpreted as a simple
distributed information-sharing protocol. The next result quantifies the benefit of information-sharing: combining independent
observations keeps the demand estimate unbiased while reducing variance.
\begin{lemma}[\textbf{Estimator bias and variance}]
\label{lem:estimators}
Both estimators are unbiased,
$\mathbb E[\hat n_{i,t}^{\mathrm{mem}}]=\mathbb E[\hat n_{i,t}^{\mathrm{share}}]=n$,
with $\mathrm{Var}[\hat n_{i,t}^{\mathrm{mem}}]=\sigma_{\mathrm{obs}}^2 I$
and $\mathrm{Var}[\hat n_{i,t}^{\mathrm{share}}]=(\sigma_{\mathrm{obs}}^2/t)\,I$.
\end{lemma}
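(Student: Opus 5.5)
The memoryless part is immediate; for the shared estimator I will show that the Bayesian fusion recursion is just a running sample mean of the fused observations, and then read off the mean and variance. Throughout I use the paper's measurement model $y=n+\varepsilon$ with $\varepsilon\sim\mathcal N(0,\sigma_{\mathrm{obs}}^2 I)$, independent across vehicles and times.

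\emph{Memoryless estimator.} Since $\hat n_{i,t}^{\mathrm{mem}}=y_{i,t}=n+\varepsilon_{i,t}$ and $\varepsilon_{i,t}$ has mean zero and covariance $\sigma_{\mathrm{obs}}^2 I$, both claims follow directly:
\[
\mathbb E[\hat n_{i,t}^{\mathrm{mem}}]=n,\qquad \mathrm{Var}[\hat n_{i,t}^{\mathrm{mem}}]=\sigma_{\mathrm{obs}}^2 I .
\]

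\emph{Shared estimator: initialization.} I fix the initialization that makes the indexing in the statement come out right. The belief starts uninformative, $\tau_{i,0}=0$, so that after $t$ fusions the precision is $\tau_{i,t}=t/\sigma_{\mathrm{obs}}^2$. Equivalently, the first fused observation sets $\mu_{i,1}=y^{(1)}$, since its gain is $(1/\sigma_{\mathrm{obs}}^2)/\tau_{i,1}=1$. With this convention the gain at step $k$ is $(1/\sigma_{\mathrm{obs}}^2)/(k/\sigma_{\mathrm{obs}}^2)=1/k$.

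\emph{Shared estimator: closed form.} The update therefore reads
\[
\mu_{i,k}=\mu_{i,k-1}+\tfrac1k\bigl(y^{(k)}-\mu_{i,k-1}\bigr).
\]
By induction on $k$, this gives $\mu_{i,t}=\tfrac1t\sum_{k=1}^t y^{(k)}$, where $y^{(1)},\dots,y^{(t)}$ are the observations fused so far. The inductive step is the identity
\[
\tfrac{k-1}{k}\cdot\tfrac{1}{k-1}\sum_{j<k}y^{(j)}+\tfrac1k\,y^{(k)}=\tfrac1k\sum_{j\le k}y^{(j)} .
\]

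\emph{Mean and variance.} Linearity of expectation immediately gives $\mathbb E[\mu_{i,t}]=n$. Each fused observation comes from a distinct vehicle, roadside unit, or time index, so the noise terms $\varepsilon^{(k)}$ are mutually independent by the measurement model. The variance of the mean is then
\[
\mathrm{Var}[\mu_{i,t}]=\tfrac1{t^2}\sum_{k=1}^t\sigma_{\mathrm{obs}}^2 I=\tfrac{\sigma_{\mathrm{obs}}^2}{t}\,I .
\]
It is worth noting that this agrees with the tracked precision, $\tau_{i,t}^{-1}=\sigma_{\mathrm{obs}}^2/t$, as expected for conjugate Gaussian fusion.

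\emph{Main obstacle.} The delicate step is bookkeeping rather than probability, and it has two parts. First, $t$ must count the independent observations fused, which requires the uninformative prior. A nondegenerate prior $(\mu_{i,0},\tau_{i,0})$ would add a bias term $\tau_{i,0}(\mu_{i,0}-n)/\tau_{i,t}$ and change the variance. Second, no observation may be fused twice, including a vehicle's own past measurement re-broadcast by a neighbour; double counting would break the independence used in the variance step. I will state both conditions explicitly as the standing assumptions under which the lemma holds, namely one fresh independent measurement per time step, so that $t$ coincides with the number of fused samples.
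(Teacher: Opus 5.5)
Your proof is correct and follows the same route as the paper. The memoryless case is read off from $y_{i,t}=n+\varepsilon_{i,t}$, and the shared estimator is identified as the sample mean of $t$ independent observations, which gives mean $n$ and variance $(\sigma_{\mathrm{obs}}^2/t)\,I$; you go slightly beyond the paper's sketch by deriving the sample-mean form from the recursion by induction and by stating explicitly the uninformative prior $\tau_{i,0}=0$ and the no-double-counting assumption, both of which the paper uses implicitly.
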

\begin{proof}
The memoryless estimate is $\hat n_{i,t}^{\mathrm{mem}}=y_{i,t}=n+\varepsilon_{i,t}$,
giving mean $n$ and variance $\sigma_{\mathrm{obs}}^2 I$. Bayesian fusion of $t$
independent Gaussian observations has posterior mean equal to the sample average
$\hat n_{i,t}^{\mathrm{share}}=\tfrac1t\sum_{k=1}^{t}y_k$, hence mean $n$ and
variance $\sigma_{\mathrm{obs}}^2/t\,I$.
\end{proof}
Because the effective-cost difference depends linearly on the estimated demand,
both unbiased estimators recover the full-information equilibrium on average.
They differ only in how much noise they introduce into the updates.
\begin{corollary}[Equilibrium is preserved]
\label{cor:partial_eq}
The effective-cost difference $\Delta(x,n)=\widetilde J^B-\widetilde J^S$ is affine
in $n$, so $\Delta(x,n)=a(x)^\top n+b(x)$ and, by Lemma~\ref{lem:estimators},
$\mathbb E[\Delta(x,\hat n_{i,t})\mid x]=a(x)^\top n+b(x)=\Delta(x,n)$ for either
estimator. The expected replicator vector field and the equilibrium location are
thus identical to the full-information case; the estimators differ only in
variance, which is $O(\sigma_{\mathrm{obs}}^2)$ for the memoryless rule and decays
as $1/t$ under belief sharing.
\end{corollary}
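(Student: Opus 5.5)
The plan is to verify the affine structure of $\Delta(x,n)$ in $n$ directly, then push expectations through it using Lemma~\ref{lem:estimators}.

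\emph{Step 1: affinity in $n$.} Fix $x$ and $\theta$. The private lane costs in~\eqref{eq:lane_costs} have coefficients $K_1^S,B_1^S,K_1^B,B_1^B$ that are affine in $(n_{\mathrm{enter}},n_{\mathrm{exit}},n_2)$. Multiplying by $x$ or $1-x$ preserves this, so $\Delta_0(x,n)$ is affine in $n$.

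For the marginal social costs, $J^{\mathrm{soc}}$ is a sum of terms of the form (flow fraction)$\times$(affine lane cost). The partials $\partial J^{\mathrm{soc}}/\partial x^a$ therefore consist of:
\begin{itemize}
\item derivatives of $x^S J^{\mathrm{self},S}+x^B J^{\mathrm{self},B}$, which give $2K_1^a x^a+B_1^a$, affine in $n$;
\item the class terms $n_2K_2^S$, $n_{\mathrm{exit}}K_2^{\mathrm{exit}}$, $n_{\mathrm{enter}}K_0^{\mathrm{enter}}$, which are linear in $n$.
\end{itemize}
This holds provided $K_2^S,K_2^{\mathrm{exit}},K_0^{\mathrm{enter}}$ are constants not depending on $n$.

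\emph{Main obstacle.} This last proviso is the step I expect to be delicate. If any of those downstream coefficients carries $n$-dependence, the marginal costs become quadratic in $n$ and exact unbiasedness fails. I would check this first against the calibrated forms in~\cite{he2025staybypass}. If it fails, I would fall back to a second-order statement: a bias of order $O(\sigma_{\mathrm{obs}}^2)$ for the memoryless rule, vanishing as $O(\sigma_{\mathrm{obs}}^2/t)$ under sharing.

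Since $\cos\theta$ and $\sin\theta$ are fixed constants, $\Delta_\theta=\cos\theta\,\Delta_0+\sin\theta\,(M^B-M^S)$ has the form $a(x)^\top n+b(x)$.

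\emph{Step 2: pass the expectation through.} Conditional on $x$, both $\hat n^{\mathrm{mem}}_{i,t}$ and $\hat n^{\mathrm{share}}_{i,t}$ are unbiased by Lemma~\ref{lem:estimators}. Their noise is independent of the current aggregate state, since observations are independent across vehicles and time. Linearity of expectation then gives $\mathbb E[\Delta(x,\hat n)\mid x]=a(x)^\top n+b(x)$.

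Substituting into~\eqref{eq:abm_update} with $\pi_i$ held fixed at time $t$, the conditional expected increment equals the full-information increment. Averaging over $i$ yields the same drift $x(1-x)\Delta^{\mathrm{pop}}(x)$ in~\eqref{eq:meanfield}. The expected field and its zero, the unique interior equilibrium of Theorem~\ref{thm:convergence}, are therefore unchanged.

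\emph{Step 3: variance.} The variance of the drift is $\mathrm{Var}[a(x)^\top\hat n]=a(x)^\top\mathrm{Var}[\hat n]\,a(x)$. This equals $\|a(x)\|^2\sigma_{\mathrm{obs}}^2$ for the memoryless rule and $\|a(x)\|^2\sigma_{\mathrm{obs}}^2/t$ under belief sharing, again by Lemma~\ref{lem:estimators}. Since $a(x)$ is bounded on $[0,1]$, this gives the stated $O(\sigma_{\mathrm{obs}}^2)$ and $1/t$ rates.
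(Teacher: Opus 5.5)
Your overall route is the paper's: affinity of $\Delta(x,n)$ in $n$, linearity of expectation, and the unbiasedness and variance formulas of Lemma~\ref{lem:estimators}. Two of your additions are sound refinements. The first is your explicit check of affinity, together with the proviso that $K_2^S$, $K_2^{\mathrm{exit}}$, $K_0^{\mathrm{enter}}$ do not depend on $n$, which the paper simply takes for granted. The second is the quadratic-form variance computation in Step~3.

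\textbf{The independence claim in Step~2 fails for belief sharing.} The claim is correct for the memoryless rule. There $\varepsilon_{i,t}$ is fresh, while $x(t)$ and $\pi_i(t)$ are functions of noise up to time $t-1$ only.

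It is false for belief sharing. There $\hat n^{\mathrm{share}}_{i,t}=n+\tfrac1t\sum_{k=1}^{t}\varepsilon_k$ reuses exactly the observations $\varepsilon_1,\dots,\varepsilon_{t-1}$ that drove the earlier updates producing $x(t)$ and $\pi_i(t)$. Hence, conditioning on the history $\mathcal F_t$ up to time $t$, $\mathbb E[\hat n^{\mathrm{share}}_{i,t}\mid\mathcal F_t]=n+\tfrac1t\sum_{k<t}\varepsilon_k\neq n$. Conditioning on $x(t)$ alone does not remove this dependence in general, because $x(t)$ is itself a function of those past noises.

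So your assertion that the conditional expected increment at time $t$ equals the full-information increment (with $\pi_i$ held fixed) is true only for the memoryless rule. Under belief sharing, the conditional drift carries a persistent, history-dependent error $a(x(t))^\top\tfrac1t\sum_{k<t}\varepsilon_k$ of typical size $O(\sigma_{\mathrm{obs}}/\sqrt t)$. This is the same memory mechanism behind the stale-information effect the paper later observes after shocks.

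To repair the argument, drop the independence claim. Read $\mathbb E[\,\cdot\mid x]$ the way the paper's appeal to Lemma~\ref{lem:estimators} actually supports: $x$ is a deterministic argument and $\hat n$ carries its marginal law. The statement then concerns the averaged field $x\mapsto x(1-x)\,\mathbb E[\Delta(x,\hat n)]$ and its zero, and for that, linearity plus unbiasedness suffice.
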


\section{Experimental Results}

We evaluate the agent-based realization of Sec.~\ref{sec:micro} using
the parameters of~\cite{wang2025altruism}, with $N=1000$ agents and step
size $\eta=0.08$.

\textbf{Macroscopic-equilibrium validation.}
For the HDV-only case ($p=0$), Fig.~\ref{fig:abm_validation} compares
the ABM steady-state bypass fraction with the closed-form Wardrop
equilibrium $\xi(n)$ and an independent SUMO simulation. All three agree
within $1$--$2\%$ across the tested demand conditions.

\begin{figure}[t]
\centering
\includegraphics[width=0.95\linewidth]{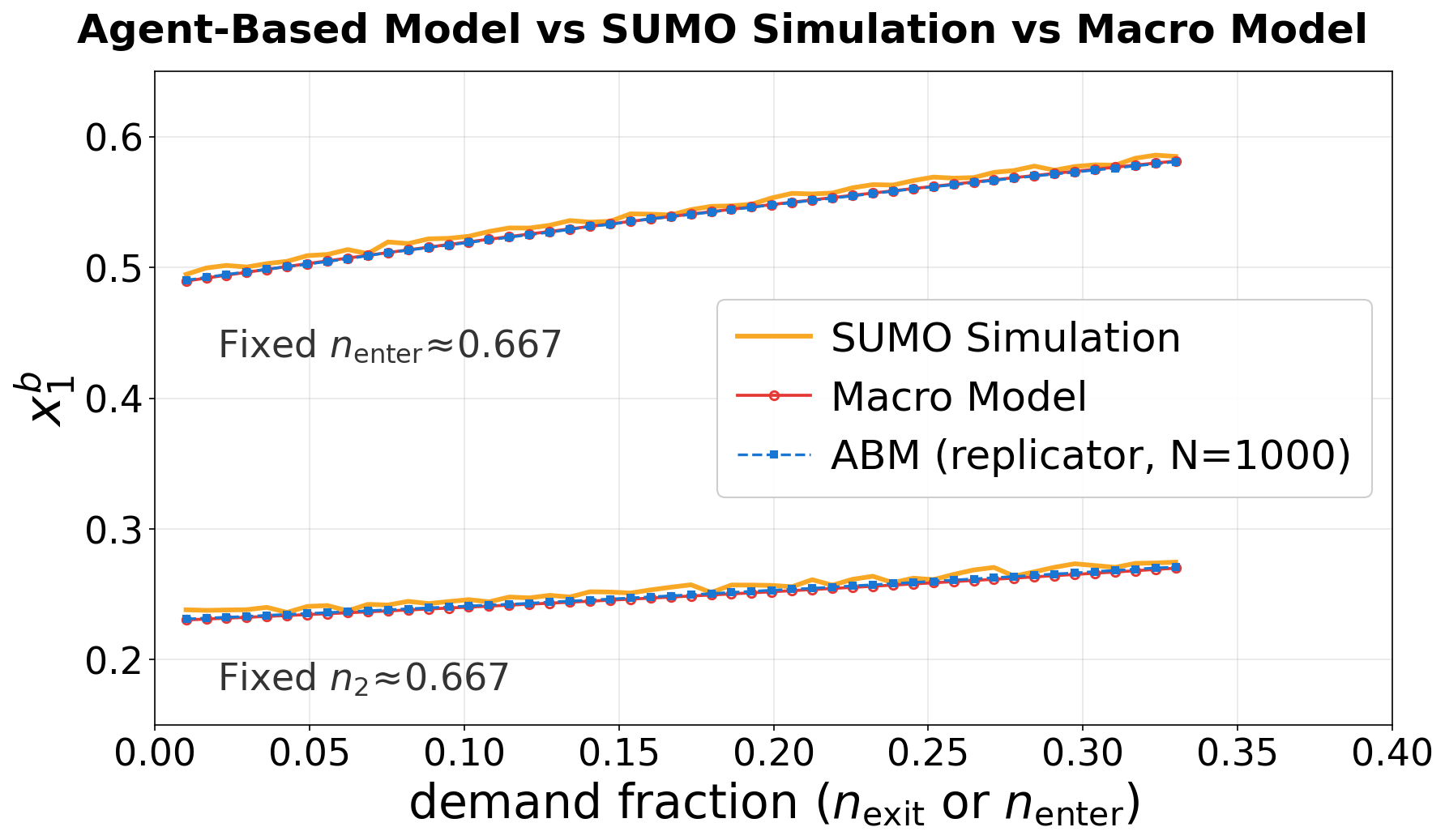}
\caption{HDV-only ABM, closed-form Wardrop equilibrium, and independent
SUMO simulation.}
\label{fig:abm_validation}
\end{figure}

For mixed traffic, Fig.~\ref{fig:cav_analytic_vs_abm} compares the
analytic equilibrium and ABM results over CAV penetration $p$ and
altruism angle $\theta$. Both recover the three regimes
of $p_1$ and $p_2$: an HDV baseline for $p\le p_1$, a decreasing
transition, and a $\theta$-dependent floor for $p\ge p_2$.

\begin{figure}[t]
\centering
\includegraphics[width=\linewidth]{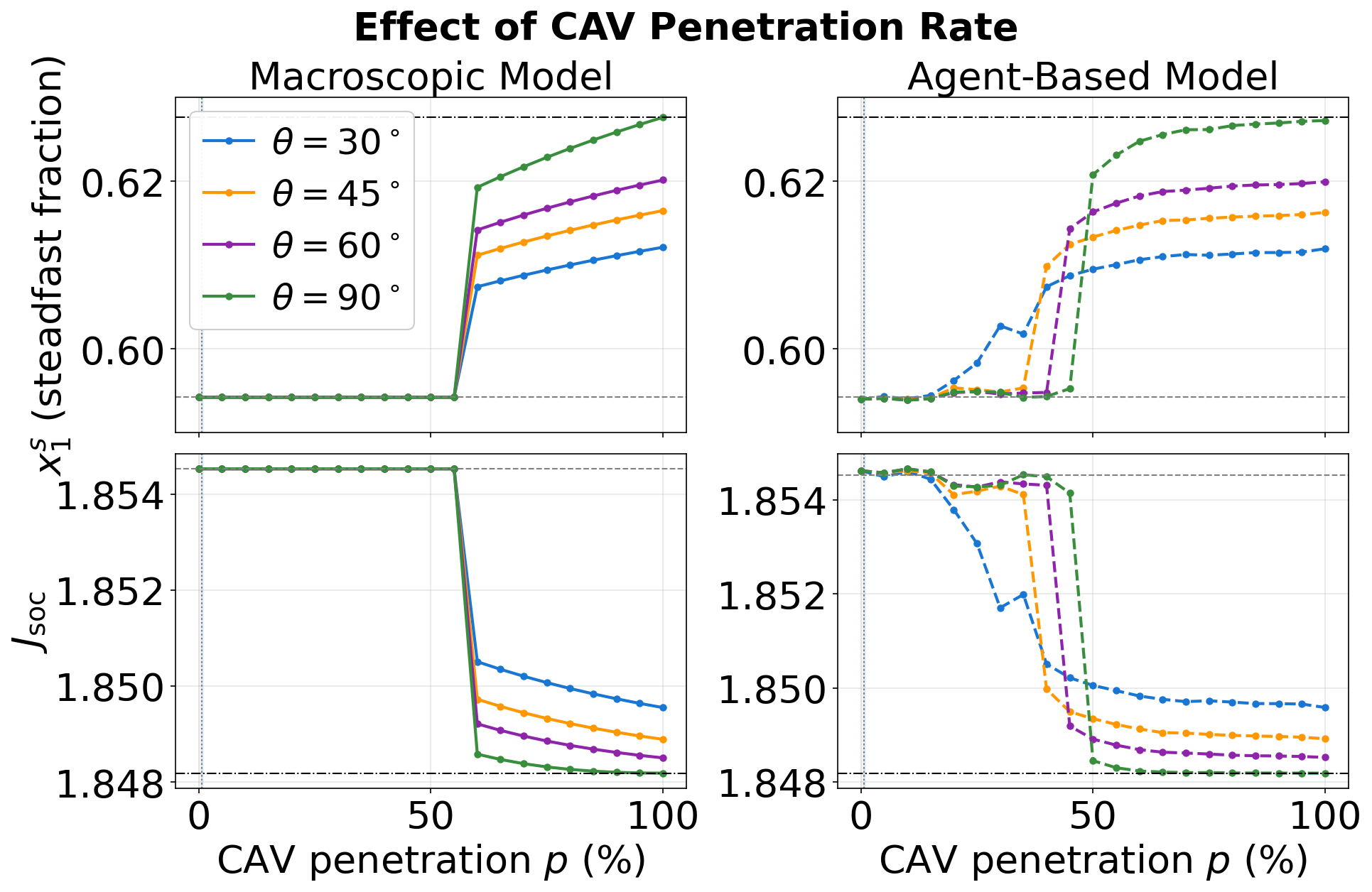}
\caption{Analytic mixed-population equilibrium (left) and ABM results
(right), including thresholds $p_1$ and $p_2$.}
\label{fig:cav_analytic_vs_abm}
\end{figure}

\textbf{Heterogeneous fleets.}
We test four homogeneous profiles and five mixtures over
$\theta\in\{\pi/6,\pi/4,\pi/3,\pi/2\}$. As shown in
Fig.~\ref{fig:weighted_theta}, all profiles retain the three-regime
structure. Profiles with the same
$\bar\theta_{\mathrm{eff}}$ produce nearly identical equilibria, while
greater heterogeneity primarily affects transient convergence through
the coherence factor $\rho_F$.

\begin{figure}[t]
\centering
\includegraphics[width=0.93\linewidth]{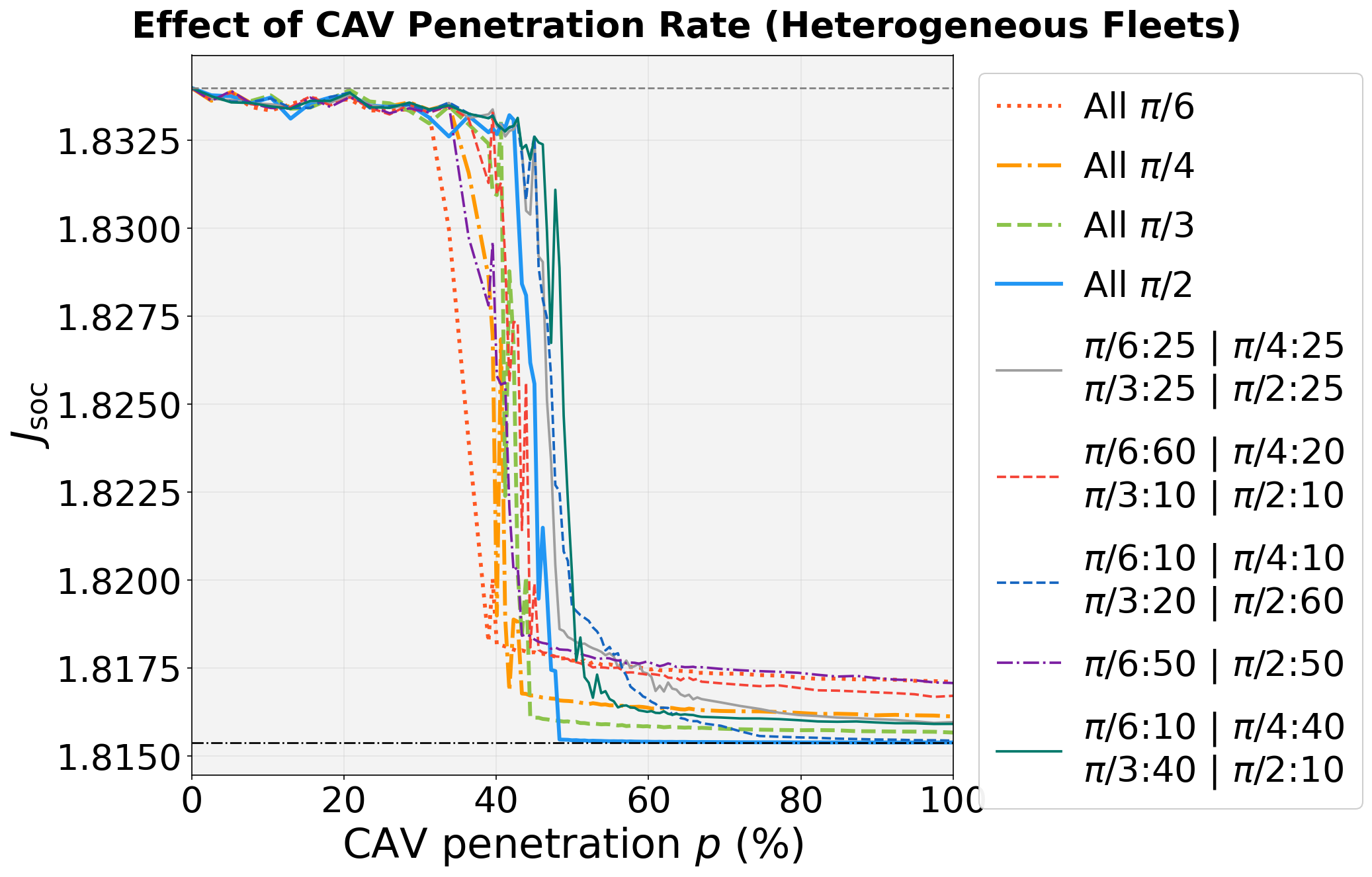}
\caption{Social cost for homogeneous and heterogeneous CAV fleets.
Profiles with matched $\bar\theta_{\mathrm{eff}}$ have similar
equilibria.}
\label{fig:weighted_theta}
\end{figure}

\textbf{Convergence speed and stability.}
Fig.~\ref{fig:teq} reports the settling time $T_{\mathrm{eq}}$ over
$(p,\eta_{\mathrm{CAV}})$ at $\theta=\pi/2$. Low-penetration fleets
require conservative learning rates, whereas higher penetration
supports faster adaptation without sustained oscillation.

\begin{figure}[t]
\centering
\includegraphics[width=0.68\linewidth]{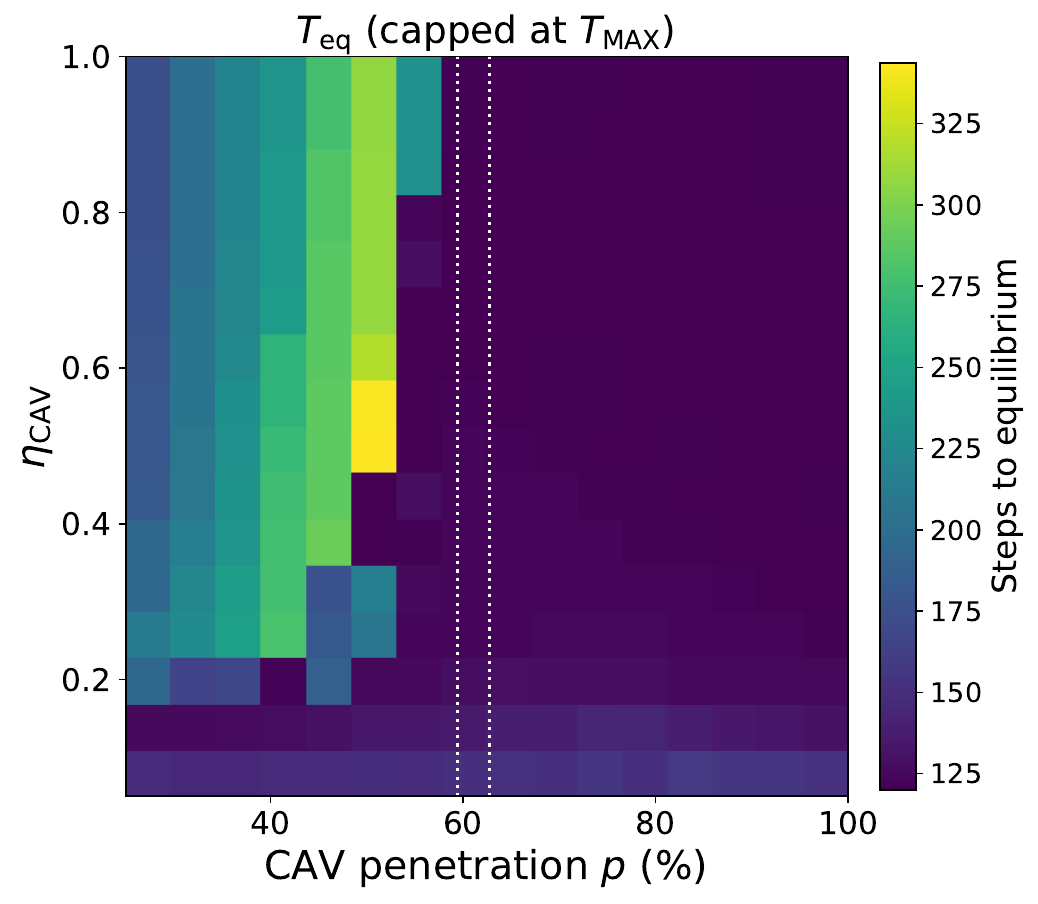}
\caption{Steps to equilibrium over $(p,\eta_{\mathrm{CAV}})$ at
$\theta=\pi/2$; dotted lines indicate $p_1$ and $p_2$.}
\label{fig:teq}
\end{figure}

\textbf{Tracking time-varying demand.}
We vary $n_2(t)$ sinusoidally and compare the ABM trajectory with the
instantaneous quasi-static equilibrium. Fig.~\ref{fig:tracking} shows
that the fleet follows slowly varying demand more accurately: the RMSE
is $\approx$ $0.007$ for period $200$ and $0.012$ for period $40$.

\begin{figure}[t]
\centering
\includegraphics[width=\linewidth]{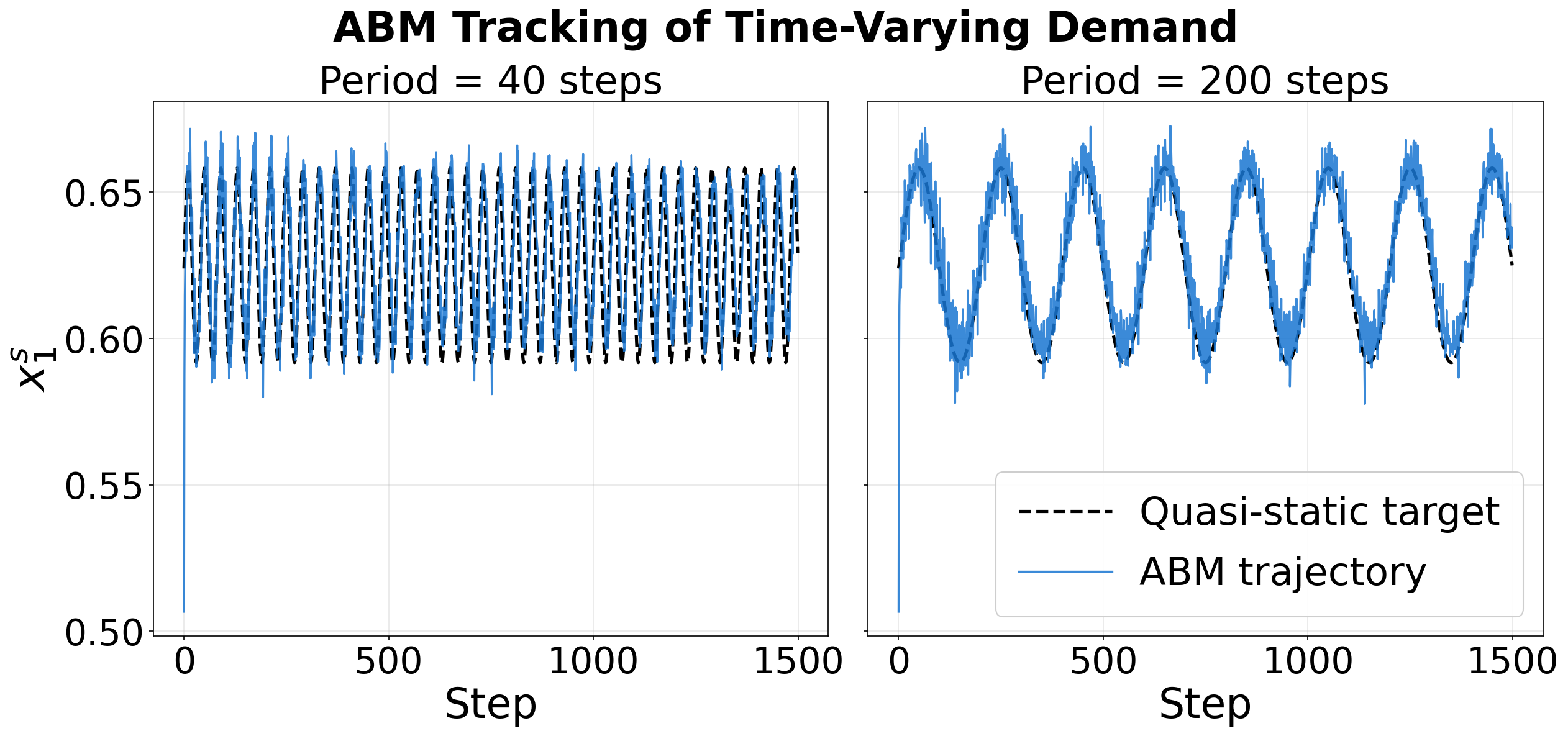}
\caption{ABM tracking of sinusoidal demand with periods $40$ and $200$
for $p>p_2$.}
\label{fig:tracking}
\end{figure}

\textbf{Distributional outcomes and free-riding.}
Aggregate social cost does not reveal how delay is distributed across
vehicles. Fig.~\ref{fig:dist_left} compares per-agent costs across
homogeneous and mixed SVO profiles. The fully altruistic fleet has the
highest measured inequity ($G\approx0.014$), while the mixed profiles
have lower inequity ($G\approx0.009$). Thus, full altruism can
concentrate the cooperative burden on the vehicles that absorb the
conflict-zone delay.

\begin{figure}[t]
\centering
\includegraphics[width=\linewidth]{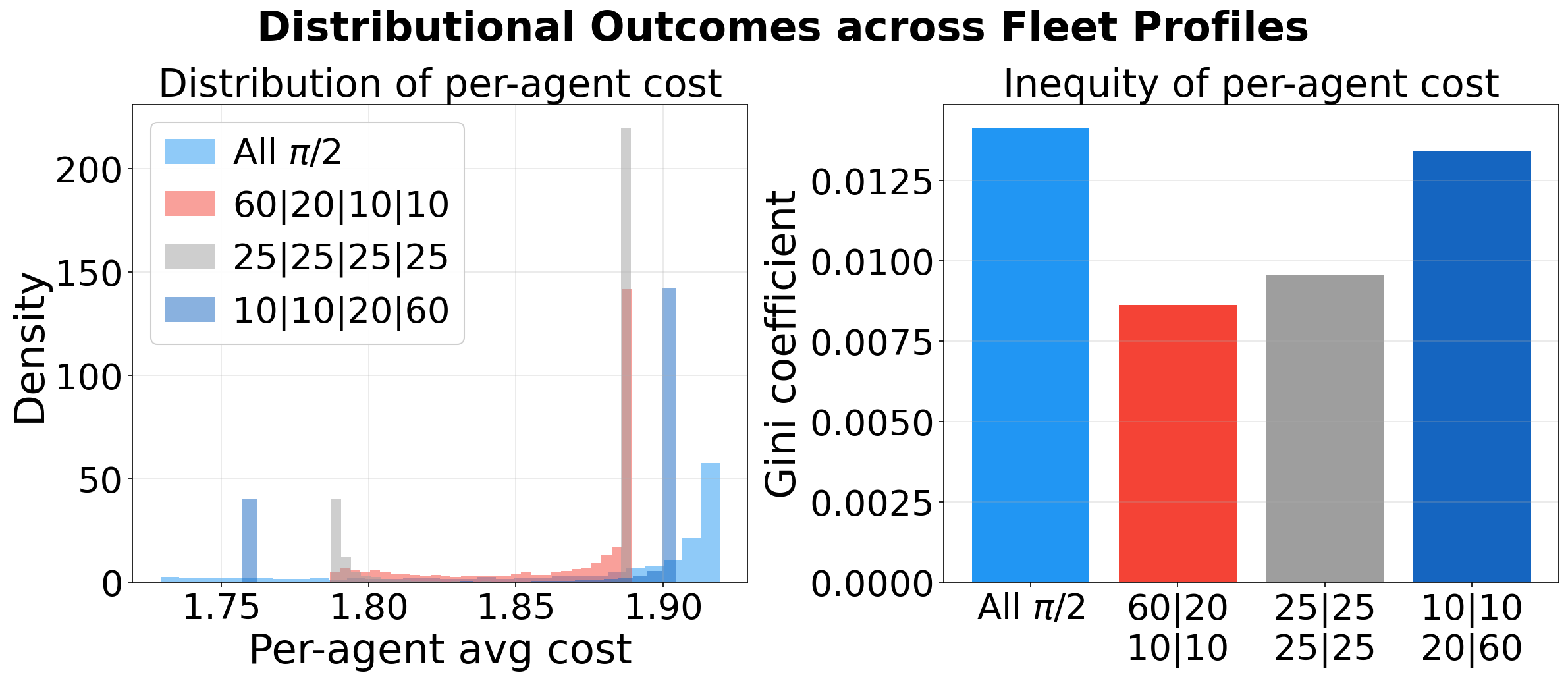}
\caption{Per-agent cost distributions (left) and Gini coefficients
(right) across CAV fleet profiles.}
\label{fig:dist_left}
\end{figure}

\textbf{Partial observability and belief sharing.}
True demand $n$ is replaced by a noisy estimate
$y_{i,t}=n+\varepsilon_{i,t}$, where
$\varepsilon_{i,t}\sim\mathcal N(0,\sigma_{\mathrm{obs}}^2I)$.
Fig.~\ref{fig:partial_jsoc} shows that observation noise delays the
transition between penetration regimes but does not change the
social-optimum floor. Cooperative belief sharing is closer to the
perfect-information result than a memoryless estimator.

\begin{figure}[t]
\centering
\includegraphics[width=0.9\linewidth]{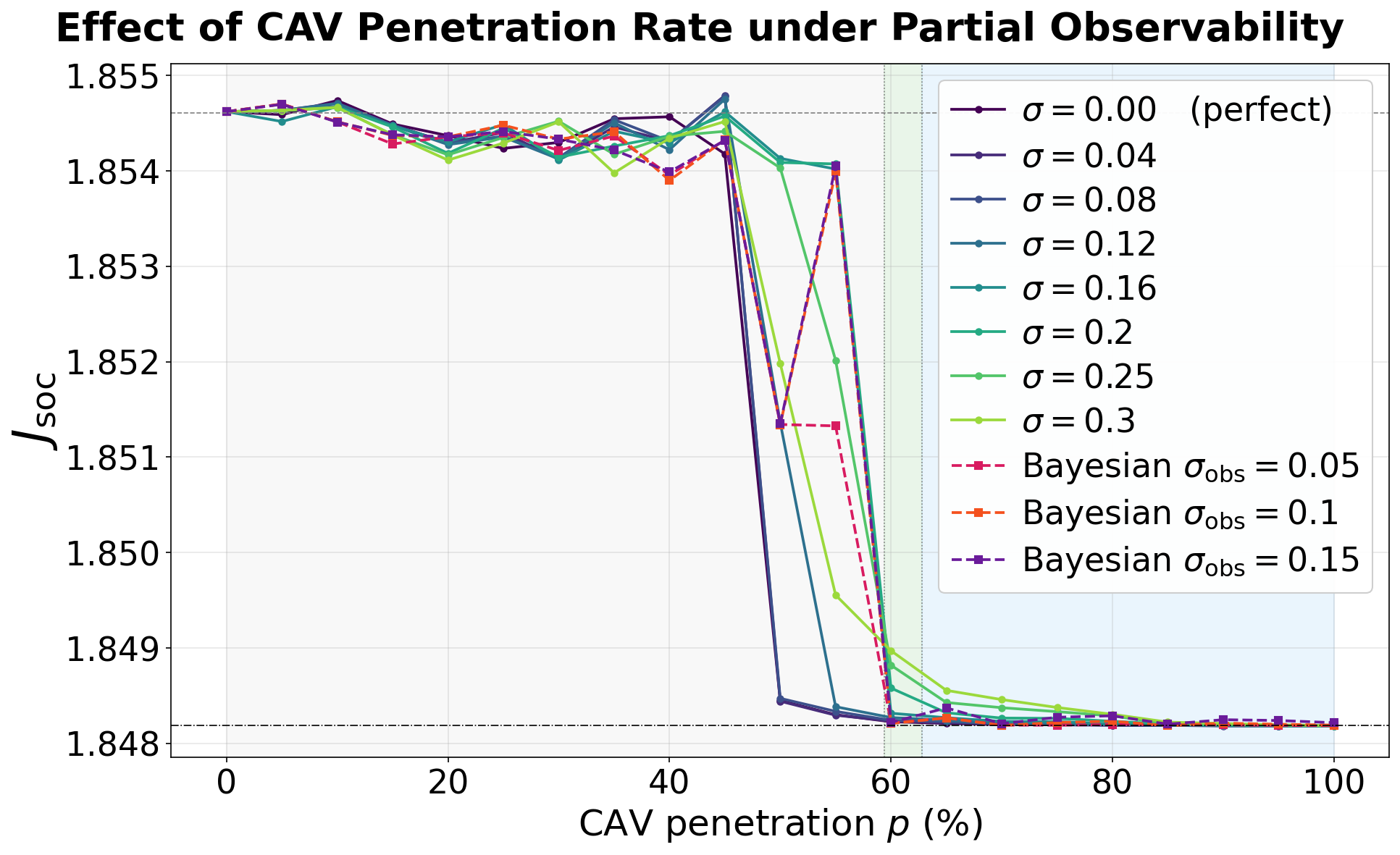}
\caption{Social cost under perfect information, memoryless estimation,
and cooperative belief sharing.}
\label{fig:partial_jsoc}
\end{figure}

\textbf{Recovery from demand shocks.}
We equilibrate the fleet at balanced demand and apply a shock
$n_2:0.33\rightarrow0.60$ at $t=200$. As shown in
Fig.~\ref{fig:shock_recovery}, all fleets reach the same post-shock
equilibrium, but recovery is faster at higher penetration:
$p\ge70\%$ settles within a few steps, whereas $p=40$--$50\%$ requires
approximately $30$--$40$ steps.

\begin{figure}[t]
\centering
\includegraphics[width=0.9\linewidth]{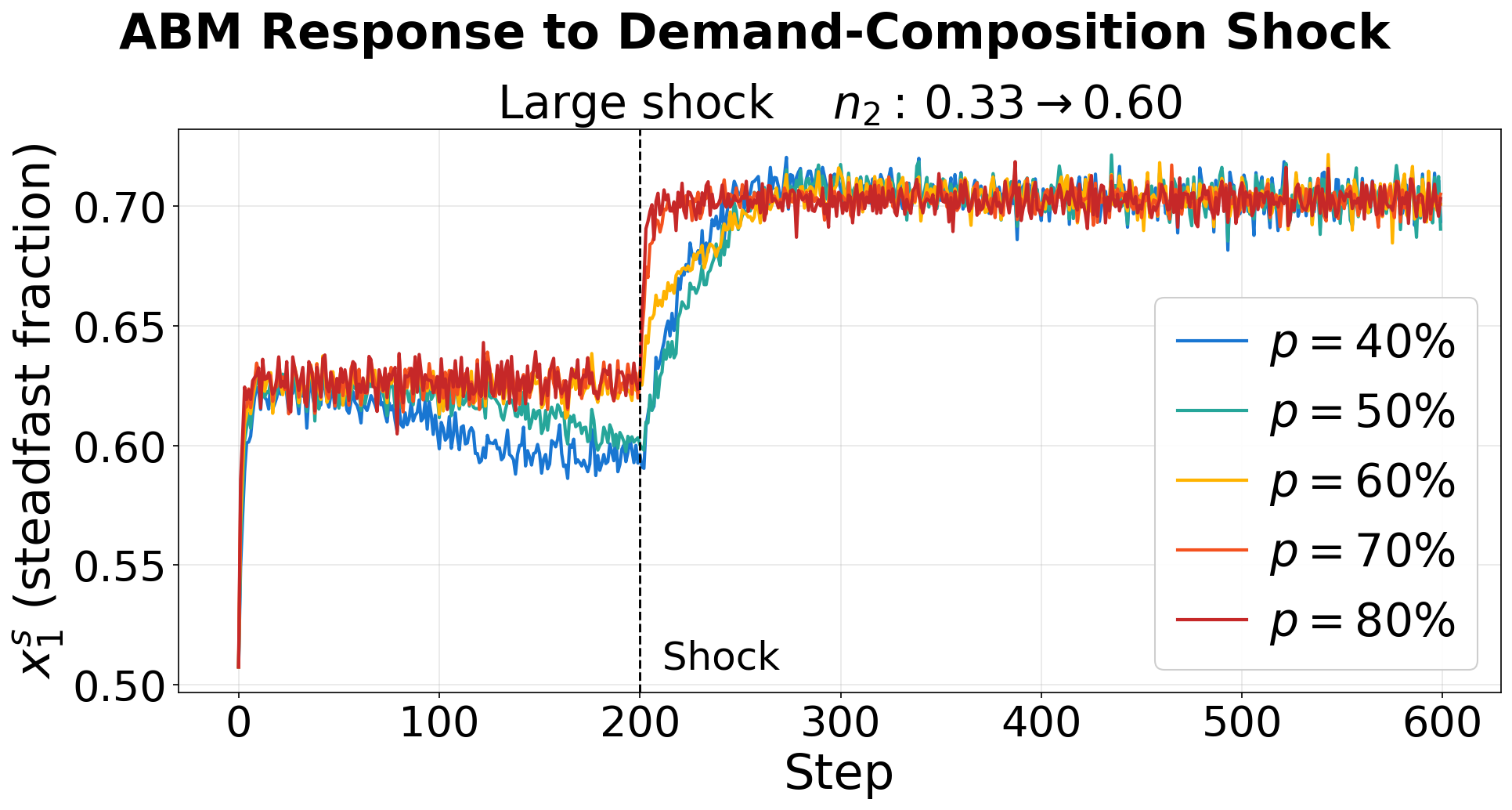}
\caption{Recovery after a demand-composition shock across CAV
penetration rates.}
\label{fig:shock_recovery}
\end{figure}

\textbf{Belief sharing under shocks.}
Bayesian belief sharing reduces estimation variance under stationary
demand but can retain stale information after a shock. We therefore
discount past observations using
\(
\tau_{t+1}
=
\lambda\tau_t+\sigma_{\mathrm{obs}}^{-2},
\ \  \lambda\in(0,1].
\)
Fig.~\ref{fig:forgetting} shows that standard fusion
($\lambda=1$) may require hundreds of steps to recover, whereas
$\lambda\le0.95$ restores convergence within a few steps.

\begin{figure}[t]
\centering
\includegraphics[width=0.9\linewidth]{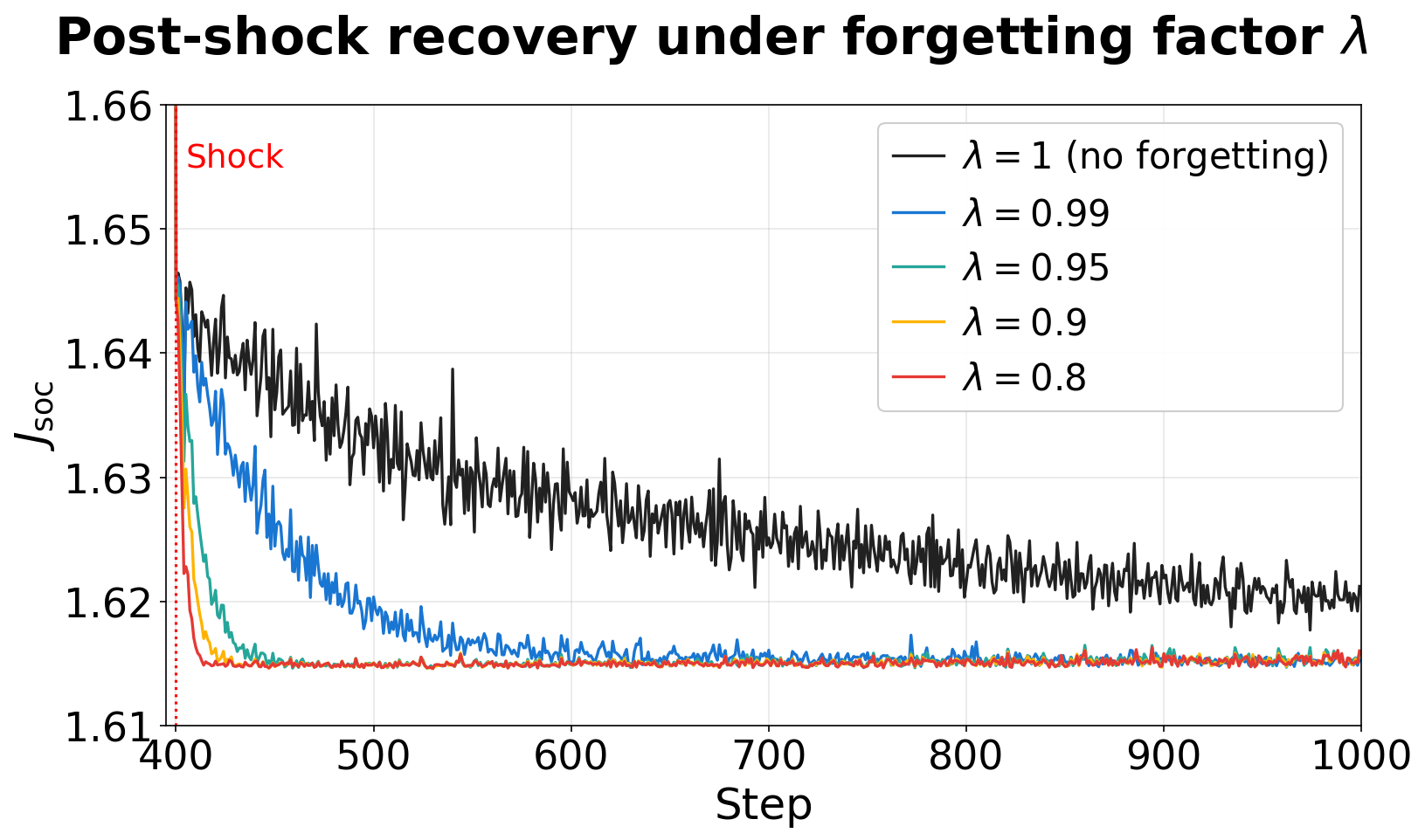}
\caption{Post-shock recovery of social cost under different forgetting
factors $\lambda$; the shock occurs at $t=400$.}
\label{fig:forgetting}
\end{figure}

\section{Conclusions}
We developed an agent-based realization of the macroscopic
weaving-ramp altruism framework, proved its convergence to the
macroscopic equilibrium, and used it to derive deployment-level
findings invisible to the static analysis. These results
are limited to a single weaving ramp under a calibrated affine-cost
model with homogeneous learning rates. Future work includes
networks of coupled weaving sections, and heterogeneous and adaptive
learning dynamics.

\bibliography{refs}

\end{document}